\newtheorem{lemma}{Lemma}
\newtheorem{theorem}[lemma]{Theorem}
\newtheorem{definition}[lemma]{Definition}
\begin{document}

\title{Roundtrip Spanners with $(2k-1)$ Stretch}

\author[1]{Ruoxu Cen\thanks{cenbo@aliyun.com}}
\author[1]{Ran Duan\thanks{duanran@mail.tsinghua.edu.cn}}
\author[1]{Yong Gu\thanks{guyong12@mails.tsinghua.edu.cn}}

\affil[1]{Institute for Interdisciplinary Information Sciences, Tsinghua University}

\maketitle

\begin{abstract}  
A roundtrip spanner of a directed graph $G$ is a subgraph of $G$ preserving roundtrip distances approximately for all pairs of vertices. Despite extensive research, there is still a small stretch gap between roundtrip spanners in directed graphs and undirected graphs. For a directed graph with real edge weights in $[1,W]$, we first propose a new deterministic algorithm that constructs a roundtrip spanner with $(2k-1)$ stretch and $O(k n^{1+1/k}\log (nW))$ edges for every integer $k> 1$, then remove the dependence of size on $W$ to give a roundtrip spanner with $(2k-1)$ stretch and $O(k n^{1+1/k}\log n)$ edges. While keeping the edge size small, our result improves the previous $2k+\epsilon$ stretch roundtrip spanners in directed graphs [Roditty, Thorup, Zwick'02; Zhu, Lam'18], and almost matches the undirected $(2k-1)$-spanner with $O(n^{1+1/k})$ edges [Alth\"ofer et al. '93] when $k$ is a constant, which is optimal under Erd\"os conjecture.
\end{abstract}

\clearpage

\section{Introduction}

A $t$-spanner of a graph $G$ is a subgraph of $G$ in which the distance between every pair of vertices is at most $t$ times their distance in $G$, where $t$ is called the stretch of the spanner.
Sparse spanner is an important choice to implicitly representing all-pair distances \cite{Zwick2001}, and
spanners also have application backgrounds in distributed systems (see~\cite{Peleg2000}).
For undirected graphs, $(2k-1)$-spanner with $O(n^{1+1/k})$ edges is proposed and conjectured to be optimal \cite{Althofer1993, Thorup2001}.
However, directed graphs may not have sparse spanners with respect to the normal distance measure.
For instance, in a bipartite graph with two sides $U$ and $V$, if there is a directed edge from every vertex in $U$ to every vertex in $V$,
then removing any edge $(u,v)$ in this graph will destroy the reachability from $u$ to $v$, so its only spanner is itself, which has $O(n^2)$ edges.
To circumvent this obstacle, one can approximate the optimal spanner in terms of edge size (e.g. in~\cite{Dinitz2011,Berman2011}), or one can define directed spanners on different distance measures.
This paper will study directed sparse spanners on roundtrip distances.

Roundtrip distance is a natural metric with good property.
Cowen and Wagner \cite{Cowen1999,Cowen2000} first introduce it into directed spanners.
Formally, roundtrip distance between vertices $u,v$ in $G$ is defined as $d_G(u\leftrightarrows v)=d_G(u\to v)+d_G(v\to u)$, where $d_G(u\to v)$ is the length of shortest path from $u$ to $v$ in $G$.
For a directed graph $G=(V,E)$, a subgraph $G'=(V,E')$ ($E'\subseteq E$) is called a $t$-roundtrip spanner of $G$ if for all $u,v\in G$, $d_{G'}(u\leftrightarrows v)\le t\cdot d_G(u\leftrightarrows v)$,
where $t$ is called the stretch of the roundtrip spanner.

In a directed graph $G=(V,E)$ ($n=|V|,m=|E|$) with real edge weights in $[1,W]$,
Roditty et al.\ \cite{Roditty:2008} give a $(2k+\epsilon)$-spanner of $O(\min\{(k^2/\epsilon) n^{1+1/k}\log(nW)$, $(k/\epsilon)^2 n^{1+1/k}(\log n)^{2-1/k}\})$ edges.
Recently, Zhu and Lam \cite{Zhu2018} derandomize it and improve the size of the spanner to $O((k/\epsilon) n^{1+1/k}\log(nW))$ edges, while the stretch is also $2k+\epsilon$.
We make a step further based on these works and reduce the stretch to $2k-1$. Formally, we state our main results in the following theorems.

\begin{theorem}
\label{the:main1}
For any directed graph $G$ with real edge weights in $[1,W]$ and integer $k\ge 1$,
there exists a $(2k-1)$-roundtrip spanner of $G$ with $O(k n^{1+1/k}\log (nW))$ edges, which can be constructed in $\tilde{O}(kmn\log W)$ time\footnote{$\tilde{O}(\cdot)$ hides $\log n$ factors.}.
\end{theorem}

By a similar scaling method in~\cite{Roditty:2008}, we can make the size of the spanner independent of the maximum edge weight $W$ to obtain a $(2k-1)$-spanner with strongly subquadratic space.

\begin{theorem}
\label{the:main2}
For any directed graph $G$ with real edge weights in $[1,W]$ and integer $k\ge 1$,
there exists a $(2k-1)$-roundtrip spanner of $G$ with $O(k n^{1+1/k}\log n)$ edges, which can be constructed in $\tilde{O}(kmn\log W)$ time.
\end{theorem}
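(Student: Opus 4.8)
The plan is the scaling reduction of~\cite{Roditty:2008}: reduce to the case where the ratio of largest to smallest edge weight is $\mathrm{poly}(n)$, for which Theorem~\ref{the:main1} already gives $O(kn^{1+1/k}\log n)$ edges. Recall that the construction of Theorem~\ref{the:main1} proceeds scale by scale, building for each dyadic $R\in[1,2nW)$ a structure that takes care of all pairs $u,v$ with $d_G(u\leftrightarrows v)\in[R,2R)$, and this is where the $\log(nW)$ comes from. I will instead group the scales into $O(\log(nW)/\log n)$ \emph{windows}, each spanning $\Theta(\log n)$ octaves, and build one structure per window on a suitably shrunk instance.

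Fix a window with parameter $R$. For an edge $e=u\to w$ let $\gamma(e)=w_e+d_G(w\to u)$ be the weight of the cheapest cycle through $e$ (so $\gamma(e)=\infty$ if $e$ lies on no cycle); note that if $e$ lies on a shortest roundtrip cycle of a pair at roundtrip distance $D$ then $\gamma(e)\le D$, while the heaviest edge of that cycle has $\gamma(e)\ge w_e\ge D/(2n)$. Form $H_R$ from $G$ by: deleting every edge with $\gamma(e)$ above a suitable $\mathrm{poly}(n)$ multiple of $R$ (it cannot lie on a relevant shortest roundtrip cycle); contracting every strongly connected component of the subgraph of edges of weight below a suitable $\mathrm{poly}(n)$ fraction of $R$ — each such ``cluster'' has roundtrip diameter $o(R/n)$, so contracting it perturbs the relevant roundtrip distances only negligibly — and equipping each cluster with shortest-path in/out-trees to a center of the cluster; and finally rounding the remaining edge weights up to the small threshold when they are below it. With the two polynomials chosen so that the total weight removed from, and the total inflation added to, any cycle of length in this window is a $1/\mathrm{poly}(n)$ fraction of its length, $H_R$ has aspect ratio $\mathrm{poly}(n)$; running Theorem~\ref{the:main1} on $H_R$ and lifting the chosen edges back to $G$ yields, for every pair whose roundtrip distance falls in this window, a roundtrip closed walk in $G$ of length $(2k-1)(1+o(1))\,d_G(u\leftrightarrows v)$. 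Removing the stray $(1+o(1))$ to reach stretch exactly $2k-1$ I expect to be routine (enlarge the polynomial margin and use the robustness of the Theorem~\ref{the:main1} construction), so I treat it as a side issue.

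The size bound is where the argument has content, and it comes from a charging claim. Put $M_v=\min\{\max_{e\in Z}w_e : Z\text{ a cycle through }v\}$ (with $M_v=\infty$ if $v$ lies on no cycle). A vertex $v$ survives as an uncontracted vertex of $H_R$ only if (i) some incident edge is not deleted, i.e.\ $\min_{e\ni v}\gamma(e)$ is at most a $\mathrm{poly}(n)$ multiple of $R$, and (ii) $v$ is not contracted, i.e.\ $v$ lies on no cycle of edges below the small threshold, i.e.\ $M_v$ exceeds a $\mathrm{poly}(n)$ fraction of $R$; since one checks $M_v\le\min_{e\ni v}\gamma(e)\le nM_v$, the two conditions together force $\log R$ into an interval of width $O(\log n)$ around $\log M_v$, so $v$ survives in only $O(1)$ windows. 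Hence $\sum_R n_R=O(n)$ where $n_R=|V(H_R)|$, and since each $n_R\le n$ this gives $\sum_R n_R^{1+1/k}\le n^{1/k}\sum_R n_R=O(n^{1+1/k})$, so the Theorem~\ref{the:main1} spanners of the $H_R$ contribute $\sum_R O(kn_R^{1+1/k}\log n)=O(kn^{1+1/k}\log n)$ edges in total. The same argument shows each edge of $G$ lies in $O(1)$ of the $H_R$, so the whole computation (plus one all-pairs-shortest-paths call for the $\gamma(e)$'s) fits in $\tilde O(kmn\log W)$ time.

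The main obstacle I anticipate is the cluster in/out-trees: a vertex is contracted in \emph{all} sufficiently large windows (not $O(1)$ of them), so re-adding these trees per window would reintroduce a $\log W$ factor. The fix is that the clusters over the windows form a laminar family, so one pays only $O(1)$ edges each time a cluster is newly created or two clusters merge — $O(n)$ in total — and one must then check that routing within a cluster through this hierarchy still costs $O(\mathrm{diam})$ (it does, because cluster diameters grow geometrically with $R$, so the per-level detour costs telescope) and that cluster representatives are chosen consistently up the hierarchy. Carrying out this bookkeeping, while verifying that the heaviest edge of any relevant cycle genuinely survives $H_R$ and that the final stretch is exactly $2k-1$, is where the write-up will need the most care.
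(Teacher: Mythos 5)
Your approach is genuinely different from the paper's. You propose a window-based reduction: chop $[1,2nW)$ into $O(\log(nW)/\log n)$ windows of $\mathrm{poly}(n)$ aspect ratio, form a shrunk graph $H_R$ per window by girth-based deletion and weight-based SCC contraction, and invoke Theorem~\ref{the:main1} as a black box on each $H_R$. The paper instead keeps the per-scale iteration (one \texttt{Cover2} call per interval $[L/(1+\epsilon),L)$), does preprocessing and $R(\cdot)$ computation \emph{once} on $G$, and splits each \texttt{Cover2} call into a contraction phase (vertices with $R(u)\ge 2(k-1)L$) and a non-contraction phase (vertices with $R(u)\in[L/8,2(k-1)L]$). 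Its size bound combines a minimum-spanning-forest cycle-property argument for the contraction phase with the observation that the non-contraction phase touches each vertex in only $O(\log k)$ scales. So the decomposition, the counting lemma, and the place where the $2k-1$ slack is spent are all different between your proposal and the paper.

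There is a real gap in your size argument, separate from the cluster-tree bookkeeping and the $(1+o(1))$ issues you already flag. Your claim that $\sum_R n_R = O(n)$ is justified only for \emph{uncontracted original} vertices: the condition ``$M_v/\mathrm{poly}(n)\le R\le M_v\cdot\mathrm{poly}(n)$'' applies to an original $v$ and says nothing about the contracted super-vertices, which also belong to $V(H_R)$ and which Theorem~\ref{the:main1} will preprocess (hitting set, in/out trees) on each window. A single contracted cluster can remain a non-isolated super-vertex in $\Theta(\log(nW)/\log n)$ consecutive windows, as long as it keeps acquiring surviving incident edges at new weight levels, so the per-vertex charging you wrote does not bound its contribution. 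The bound $\sum_R n_R = O(n)$ does in fact hold, but you need the paper's MSF argument adapted to your window widths: take a minimum spanning forest $F$ of $G$ under the girth weights $g(e)$; a super-vertex is non-isolated in $H_R$ only if it lies in an $F$-component containing at least one edge whose girth falls in the ``medium'' band $[R/\mathrm{poly}(n), R\cdot\mathrm{poly}(n)]$, the total non-isolated super-vertex count in a window is at most twice the number of such medium $F$-edges, and each $F$-edge is medium in $O(1)$ windows. Without this (or an equivalent) step your $\sum_R n_R^{1+1/k} = O(n^{1+1/k})$ estimate, and hence the $O(kn^{1+1/k}\log n)$ size bound, is unproved.

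Two further cautions. First, your fix for the cluster in/out-trees is not really ``$O(1)$ edges per merge'' --- merging $m$ subclusters into one cluster needs $\Theta(m)$ inter-cluster connecting edges; the correct statement is $O(n)$ total across the laminar tree, and you then have to verify both that expanding a contracted path through $O(n)$ intermediate clusters only costs an extra $o(R/n)$ per cluster (your geometric-diameter telescoping), and that the Theorem~\ref{the:main1} analysis tolerates the lifted weights. Second, be aware that the paper's authors explicitly rejected ``contract and run Section~3 directly'' on the grounds that the stretch becomes strictly larger than $2k-1$, and they changed the algorithm rather than absorb the slack by shrinking $\epsilon$; your plan of running \texttt{Cover} at a slightly reduced $\epsilon$ (the range allows any $\epsilon\in(0,1/(2k-2)]$) should work, but it deserves a written verification rather than being dismissed as routine, since it is precisely the step that kept the authors from taking your route.
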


Actually, our result almost matches the lower bound following girth conjecture.
The girth conjecture, implicitly mentioned by Erd\"os \cite{Erdos1964}, says that for any $k$, there exists a graph with $n$ vertices and $\Omega(n^{1+1/k})$ edges whose girth (minimum cycle) is at least $2k+2$.
This conjecture implies that no algorithm can construct a spanner of $O(n^{1+1/k})$ size and less than $2k-1$ stretch for all undirected graph with $n$ vertices \cite{Thorup2001}.
This lower bound also holds for roundtrip spanners on directed graphs.

Our approach is based on the scaling constructions of the $(2k+\epsilon)$-stretch roundtrip spanners in~\cite{Roditty:2008, Zhu2018}. To reduce the stretch, we construct inward and outward shortest path trees from vertices in a hitting set~\cite{Aingworth1999,Dor2000} of size $O(n^{1/k})$, and carefully choose the order to process vertices in order to make the stretch exactly $2k-1$. To further make the size of the spanner strongly subquatratic, we use a similar approach as in~\cite{Roditty:2008} to contract small edges in every scale, and treat vertices with different radii of balls of size $n^{1-1/k}$ differently.

\subsection{Related Works}

The construction time in this paper is $\tilde{O}(kmn\log W)$. However, there exist roundtrip spanners with $o(mn)$ construction time but larger stretches.
Pachoci et al.\ \cite{Pachocki2018} proposes an algorithm which can construct $O(k\log n)$-roundtrip spanner with $O(n^{1+1/k}\log^2 n)$ edges. Its construction time is $O(mn^{1/k}\log^5 n)$, which breaks the cubic time barrier.
Very recently, Chechik et al.\ \cite{Chechik2019} give an algorithm which constructs $O(k\log\log n)$-roundtrip spanners with $\tilde{O}(n^{1+1/k})$ edges in $\tilde{O}(m^{1+1/k})$ time.

For spanners defined with respect to normal directed distance, researchers aim to approximate the $k$-spanner with minimum number of edges.
Dinitz and Krauthgamer \cite{Dinitz2011} achieve $\tilde{O}(n^{2/3})$ approximation in terms of edge size, and Bermen et al. \cite{Berman2011} improves the approximation ratio to $\tilde{O}(n^{1/2})$.

Another type of directed spanners is transitive-closure spanner, introduced by Bhattacharyya et al.\ \cite{Bhattacharyya2012}. In this setting the answer may not be a subgraph of $G$, but a subgraph of the transitive closure of $G$. In other words, selecting edges outside the graph is permitted. The tradeoff is between diameter (maximum distance) and edge size. One of Bhattacharyya et al.'s results is spanners with diameter $k$ and $O((n\log n)^{1-1/k})$ approximation of optimal edge size~\cite{Bhattacharyya2012}, using a combination of linear programming rounding and sampling. Berman et al. \cite{Berman2010} improves the approximation ratio to $O(n^{1-1/[k/2]}\log n)$. We refer to Raskhodnikova \cite{Raskhodnikova2010} as a review of transitive-closure spanners.

\subsection{Organization}
In Section~\ref{sec:pre}, the notations and basic concepts used in this paper will be discussed. In Section~\ref{sec:algo} we describe the construction of the $(2k-1)$-roundtrip spanner with $O(kn^{1+1/k}\log(nW))$ edges, thus proving Theorem~\ref{the:main1}. Then in Section~\ref{sec:strongly} we improve the size of the spanner to $O(kn^{1+1/k}\log n)$ and still keep the stretch to $(2k-1)$, thus proving Theorem~\ref{the:main2}. The conclusion and further direction are discussed in Section~\ref{sec:conclusion}.

\section{Preliminaries}\label{sec:pre}

In this paper we consider a directed graph $G=(V,E)$ with non-negative real edge weights $w$ where $w(e)\in [1,W]$ for all $e\in E$.
Denote $G[U]$ to be the subgraph of $G$ induced by $U\subseteq V$, i.e.\ $G[U]=(U,E\cap(U\times U))$.
A roundtrip path between nodes $u$ and $v$ is a cycle (not necessarily simple) passing through $u$ and $v$.
The roundtrip distance between $u$ and $v$ is the minimum length of roundtrip paths between $u$ and $v$.
Denote $d_U(u\leftrightarrows v)$ to be the roundtrip distance between $u$ and $v$ in $G[U]$. (Sometimes we may also use $d_U(u\leftrightarrows v)$ to denote a roundtrip shortest path between $u,v$ in $G[U]$.) It satisfies:
\begin{itemize}
    \item For $u,v\in U$, $d_U(u\leftrightarrows u)=0$ and $d_U(u\leftrightarrows v)=d_U(v\leftrightarrows u)$.
    \item For $u,v\in U$, $d_U(u\leftrightarrows v)=d_U(u\to v)+d_U(v\to u)$.
    \item For $u,v,w\in U$, $d_U(u\leftrightarrows v)\leq d_U(u\leftrightarrows w)+d_U(w\leftrightarrows v)$.
\end{itemize}
Here $d_U(u\to v)$ is the one-way distance from $u$ to $v$ in $G[U]$. We use $d(u\leftrightarrows v)$ to denote the roundtrip distance between $u$ and $v$ in the original graph $G=(V,E)$.

In $G$, a $t$-roundtrip spanner of $G$ is a subgraph $H$ of $G$ on the same vertex set $V$ such that the roundtrip distance between any pair of $u,v\in V$ in $H$ is at most $t\cdot d(u\leftrightarrows v)$. $t$ is called the \emph{stretch} of the spanner.

For a subset of vertices $U\subseteq V$, given a center $u\in U$ and a radius $R$, define roundtrip ball $Ball_U(u,R)$ to be the set of vertices whose roundtrip distance on $G[U]$ to center $u$ is strictly smaller than the radius $R$. Formally, $Ball_U(u,R)=\{v\in U:d_U(u\leftrightarrows v)< R\}$. Then the size of the ball, denoted by $|Ball_U(u,R)|$, is the number of vertices in it. Similarly we define $\overline{Ball}_U(u,R)=\{v\in U:d_U(u\leftrightarrows v)\le R\}$.
Subroutine \verb;InOutTrees;$(U,u,R)$ calculates the edge set of an inward and an outward shortest path tree centered at $u$ spanning vertices in $Ball_U(u,R)$ on $G[U]$. (That is, the shortest path tree from $u$ to all vertices in $Ball_U(u,R)$ and the shortest path tree from all vertices in $Ball_U(u,R)$ to $u$.)
It is easy to see that the shortest path trees will not contain vertices outside $Ball_U(u,R)$:

\begin{lemma}
The inward and outward shortest path trees returned by \verb;InOutTrees;$(U,u,R)$ only contain vertices in $Ball_U(u,R)$.
\end{lemma}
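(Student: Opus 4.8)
The plan is to show that any vertex appearing in one of the two shortest path trees produced by \verb;InOutTrees;$(U,u,R)$ must lie in $Ball_U(u,R)$. Since the subroutine builds the outward shortest path tree from $u$ to all vertices of $Ball_U(u,R)$ and the inward shortest path tree from all vertices of $Ball_U(u,R)$ back to $u$, the only way a ``foreign'' vertex $x\notin Ball_U(u,R)$ could sneak in is as an internal node on a tree path. So the crux is to argue that every internal vertex on a shortest $u\to v$ path (for $v\in Ball_U(u,R)$), and symmetrically every internal vertex on a shortest $v\to u$ path, is itself in the ball.

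First I would fix $v\in Ball_U(u,R)$ and let $x$ be any vertex lying on the shortest path $P$ from $u$ to $v$ in $G[U]$ that the outward tree uses. Because $P$ is a shortest path, its prefix from $u$ to $x$ is a shortest $u\to x$ path, so $d_U(u\to x)\le d_U(u\to v)$. Now I need a bound on $d_U(x\to u)$; here I would use that $x$ also lies (by the inward tree construction, or just by the triangle-type reasoning on one-way distances) on a shortest path from $v$ back to $u$ is not automatic, so instead I bound $d_U(x\to u)\le d_U(x\to v)+d_U(v\to u)$ via concatenation of one-way paths, and $d_U(x\to v)$ is the suffix of $P$, hence $d_U(x\to v)\le d_U(u\to v)$. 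Adding up, $d_U(u\leftrightarrows x)=d_U(u\to x)+d_U(x\to u)\le d_U(u\to x)+d_U(x\to v)+d_U(v\to u)\le d_U(u\to v)+d_U(v\to u)=d_U(u\leftrightarrows v)<R$, so $x\in Ball_U(u,R)$. The argument for a vertex on a shortest $v\to u$ path used by the inward tree is symmetric, swapping the roles of the two one-way directions.

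The one genuinely delicate point — and the step I expect to be the main obstacle — is making sure the decomposition $d_U(u\leftrightarrows x)\le d_U(u\leftrightarrows v)$ is valid regardless of which concrete shortest paths the subroutine picks; that is, the claim is about \emph{distances}, not about the particular trees, so I should phrase everything in terms of $d_U(\cdot\to\cdot)$ and only at the end observe that any vertex on any tree edge is on some shortest one-way path from or to $u$. Once that is set up, the inequality chain above is just the triangle inequality for one-way distances in $G[U]$, together with the prefix/suffix-of-a-shortest-path observation, and the conclusion $d_U(u\leftrightarrows x)<R$ follows immediately. I would write the proof for the outward tree in full and remark that the inward case is identical by symmetry of roundtrip distance.
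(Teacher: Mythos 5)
Your argument is correct and is essentially the paper's own proof, just unpacked. The paper's version observes directly that any vertex $w$ lying on a roundtrip cycle $C$ through $u$ and $v$ of length $<R$ automatically satisfies $d_U(u\leftrightarrows w)\le |C| < R$ (since $C$ is itself a roundtrip path from $u$ to $w$), and takes $C$ to be the concatenation of the shortest $u\to v$ and $v\to u$ paths that the trees use; your inequality chain $d_U(u\leftrightarrows x)\le d_U(u\to x)+d_U(x\to v)+d_U(v\to u)=d_U(u\leftrightarrows v)<R$ is exactly that observation expanded into one-way distances via the prefix/suffix-of-a-shortest-path property.
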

\begin{proof}
For any $v\in Ball_U(u,R)$, let $C$ be a cycle containing $u$ and $v$ such that the length of $C$ is less than $R$. Then for any vertex $w\in C$, $d_U(u\leftrightarrows w)<R$, so $w$ must be also in the trees returned by \verb;InOutTrees;$(U,u,R)$.
\end{proof}

For all notations above, we can omit the subscript $V$ when the roundtrip distance is considered in the original graph $G=(V,E)$. Our algorithm relies on the following well-known theorem to calculate hitting sets deterministically.

\begin{theorem}{(Cf.\ Aingworth et al.\ \cite{Aingworth1999}, Dor et al.\ \cite{Dor2000})}
\label{the:hitset}
For universe $V$ and its subsets $S_1,S_2,\ldots, S_n$, if $|V|=n$ and the size of each $S_i$ is greater than $p$, then there exists a hitting set $H\subseteq V$ intersecting all $S_i$, whose size $|H|\le (n\ln n)/p$, and such a set $H$ can be found in $O(np)$ time deterministically.
\end{theorem}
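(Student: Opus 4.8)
The plan is to prove this by the standard greedy hitting-set argument, together with one preprocessing step that is essential for reaching the $O(np)$ running time. First I would shrink the instance. Since $|S_i|>p$ and set sizes are integers, let $q$ be the least integer exceeding $p$ and replace each $S_i$ by an arbitrary subset $S_i'\subseteq S_i$ with $|S_i'|=q$ (this is possible because $|S_i|\ge q$). Any $H$ that hits all the $S_i'$ also hits all the $S_i$, and now $\sum_i|S_i'|=nq=O(np)$, so the reduced instance has total size $O(np)$ and, written as vertex-to-set incidence lists, can be produced in $O(np)$ time.

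The algorithm is then the obvious greedy one, run on the reduced instance: while some set is still not hit, pick a vertex $v$ lying in the largest number of not-yet-hit sets, add $v$ to $H$, and mark every set containing $v$ as hit. For the size bound the key point is an averaging argument: if $r$ sets are still unhit then $\sum_{i:\,S_i'\text{ unhit}}|S_i'|>rp$, so averaging over the $n$ vertices of $V$ shows that some vertex lies in more than $rp/n$ of the unhit sets; hence one greedy step reduces the number of unhit sets to fewer than $(1-p/n)$ times its previous value. Starting from $n$ unhit sets, after $T=(n\ln n)/p$ steps (rounding up if necessary) the number of unhit sets is less than $n(1-p/n)^{T}<n\,e^{-pT/n}\le n\,e^{-\ln n}=1$, hence zero, which gives $|H|\le (n\ln n)/p$.

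For the running time I would implement greedy with a bucket-based priority structure. Maintain, for each $v\in V$, a counter $c(v)$ equal to the number of currently unhit sets containing $v$ (initialized in $O(\sum_i|S_i'|)=O(np)$ time from the incidence lists), and keep the vertices stored in buckets indexed by the current value of $c(v)$, with a pointer to the highest nonempty bucket. Each iteration: take a vertex $v$ from the top bucket and move it into $H$; then for each unhit set $S_i'$ containing $v$, mark $S_i'$ hit and, for every $u\in S_i'$, decrement $c(u)$ and move $u$ down one bucket. Each set is processed (hit) at most once, at cost $O(|S_i'|)=O(p)$, so all the counter updates cost $O(np)$ in total; counters never increase, so the top-bucket pointer only moves downward, for $O(n)$ moves altogether; and the per-iteration overhead is $O(1)$ over at most $n$ iterations. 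Since $p\ge1$, the total is $O(np)$.

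I expect no real difficulty in the size bound — it is a textbook greedy/averaging calculation — and the part that takes care is the running time. The $O(np)$ target (as opposed to a bound like $O(n^2)$ or $O(n^2\log n/p)$) is exactly what forces both ingredients above: the truncation step, so that no more than $O(np)$ vertex–set incidences are ever examined, and the bucketed priority queue, so that repeatedly extracting a most-useful vertex is amortized $O(1)$ rather than $O(n)$.
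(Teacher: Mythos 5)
The paper does not prove this statement at all --- it is invoked as a known result with a citation to Aingworth et al.\ and Dor et al. Your greedy argument with the truncation of each set to $O(p)$ elements and the bucket-based priority queue is correct and is essentially the standard proof given in those references, so there is nothing to object to (modulo the usual harmless slop of at most one extra vertex from rounding $(n\ln n)/p$ up to an integer number of greedy steps).
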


\section{A $(2k-1)$-Roundtrip Spanner Algorithm}
\label{sec:algo}
In this section we introduce our main algorithm constructing a $(2k-1)$-roundtrip spanner with $O(k n^{1+1/k}\log(nW))$ edges for any $G$.
We may assume $k\ge 2$ in the following analysis, since the result is trivial for $k=1$.

Our approach combines the ideas of~\cite{Roditty:2008} and~\cite{Zhu2018}. In~\cite{Zhu2018}, given a length $L$, we pick an arbitrary vertex $u$ and find the smallest integer $h$ such that $|\overline{Ball}(u,(h+1)L)|<n^{1/k}|\overline{Ball}(u,h\cdot L)|$, then we include the inward and outward shortest path tree centered at $u$ spanning $\overline{Ball}(u,(h+1)L)$ and remove vertices in $\overline{Ball}(u,h\cdot L)$ from $V$. We can see that $h\leq k$, so the stretch is $2k$ for $u,v$ with roundtrip distance $L$, and by a scaling approach the final stretch is $2k+\epsilon$. We observe that if $h=k-1$, $|\overline{Ball}(u,(k-1)L)|\geq n^{(k-1)/k}$, so by Theorem~\ref{the:hitset} we can preprocess the graph by choosing a hitting set $H$ with size $O(n^{1/k}\log n)$ and construct inward and outward shortest path trees centered at all vertices in $H$, then we do not need to include the shortest path trees spanning $\overline{Ball}(u,k\cdot L)$. The stretch can then be decreased to $2k-1+\epsilon$. To make the stretch equal $2k-1$, instead of arbitrarily selecting $u$ each time, we carefully define the order to select $u$.

\subsection{Preprocessing}
\label{sec:preprocess}
We first define a radius $R(u)$ for each vertex $u$. It is crucial for the processing order of vertices.

\begin{definition}
For all $u\in V$, we define $R(u)$ to be the maximum length $R$ such that $|Ball(u,R)|< n^{1-1/k}$, that is, if we sort the vertices by their roundtrip distance to $u$ in $G$ by increasing order, $R(u)$ is the roundtrip distance from $u$ to the $\lceil n^{1-1/k}\rceil$-th vertex.
\end{definition}

For any $u\in V$, $|\overline{Ball}(u,R(u))|\ge n^{1-1/k}$.
By Theorem \ref{the:hitset}, we can find a hitting set $H$ intersecting all sets in $\{\overline{Ball}(u,R(u)):u\in V\}$, such that $|H|=O(n^{1/k}\log n)$.
For all $t\in H$, we build an inward and an outward shortest path tree of $G$ centered at $t$, and denote the set of edges of these trees by $E_0$ and include them in the final spanner.
This step generates $O(n^{1+1/k}\log n)$ edges in total, and it is easy to obtain the following statement: 
\begin{lemma}\label{lem:large-distance}
For $u,v\in V$ such that $d(u\leftrightarrows v)\geq R(u)/(k-1)$, the roundtrip distance between $u$ and $v$ in the graph $(V,E_0)$ is at most $(2k-1)d(u\leftrightarrows v)$.
\end{lemma}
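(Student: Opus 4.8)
The plan is to exhibit a short roundtrip path between $u$ and $v$ inside $(V,E_0)$ by routing through a hitting-set vertex that lies close to $u$. First I would observe that, by the definition of $R(u)$ and the remark immediately preceding the lemma, $\overline{Ball}(u,R(u))$ has size at least $n^{1-1/k}$, hence it is one of the sets hit by $H$; fix a vertex $t\in H\cap\overline{Ball}(u,R(u))$. By construction $E_0$ contains an inward and an outward shortest path tree of $G$ rooted at $t$, so in $(V,E_0)$ we have $d_{E_0}(t\to x)=d(t\to x)$ and $d_{E_0}(x\to t)=d(x\to t)$ for every $x\in V$; in particular $d_{E_0}(u\leftrightarrows t)=d(u\leftrightarrows t)\le R(u)$ and $d_{E_0}(v\leftrightarrows t)=d(v\leftrightarrows t)$.

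The key step is to bound $d_{E_0}(u\leftrightarrows v)$ via the triangle inequality for roundtrip distance through $t$:
\[
d_{E_0}(u\leftrightarrows v)\le d_{E_0}(u\leftrightarrows t)+d_{E_0}(t\leftrightarrows v)\le R(u)+d(t\leftrightarrows v).
\]
Now apply the triangle inequality again in $G$ to control $d(t\leftrightarrows v)$ by $d(t\leftrightarrows u)+d(u\leftrightarrows v)\le R(u)+d(u\leftrightarrows v)$, giving
\[
d_{E_0}(u\leftrightarrows v)\le 2R(u)+d(u\leftrightarrows v).
\]
Finally I would use the hypothesis $d(u\leftrightarrows v)\ge R(u)/(k-1)$, i.e.\ $R(u)\le (k-1)\,d(u\leftrightarrows v)$, to conclude
\[
d_{E_0}(u\leftrightarrows v)\le 2(k-1)\,d(u\leftrightarrows v)+d(u\leftrightarrows v)=(2k-1)\,d(u\leftrightarrows v),
\]
which is exactly the claimed bound.

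The only real point requiring care — the ``main obstacle,'' such as it is — is making sure the roundtrip shortest paths realized inside $(V,E_0)$ are genuinely the same length as in $G$: this needs the fact that $E_0$ contains \emph{both} the inward and the outward shortest path trees at $t$ spanning all of $V$ (so every one-way distance to and from $t$ is preserved), together with the decomposition $d(x\leftrightarrows y)=d(x\to y)+d(y\to x)$ and the triangle inequality for roundtrip distances, both listed in Section~\ref{sec:pre}. Everything else is a two-step application of the triangle inequality followed by substituting the distance hypothesis, so no nontrivial computation is involved.
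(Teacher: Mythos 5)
Your proof is correct and follows essentially the same approach as the paper: both route through a hitting-set vertex $t\in H\cap\overline{Ball}(u,R(u))$, use that the inward/outward shortest path trees rooted at $t$ preserve $d(u\leftrightarrows t)$ and $d(t\leftrightarrows v)$ in $E_0$, and apply the roundtrip triangle inequality together with the hypothesis $R(u)\le(k-1)d(u\leftrightarrows v)$. The only difference is cosmetic — you defer substituting the bound on $R(u)$ until the end, while the paper substitutes it term by term — so the arithmetic is identical.
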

\begin{proof}
Find the vertex $t\in H$ such that $t\in \overline{Ball}(u,R(u))$, that is, $d(u\leftrightarrows t)\leq R(u)$. Then the inward and outward shortest path trees from $t$ will include $d(u\leftrightarrows t)$ and $d(t\leftrightarrows v)$. By $R(u)\leq (k-1)d(u\leftrightarrows v)$, we have $d(u\leftrightarrows t)\leq (k-1)d(u\leftrightarrows v)$ and $d(t\leftrightarrows v)\leq d(t\leftrightarrows u)+d(u\leftrightarrows v)\leq k\cdot d(u\leftrightarrows v)$.
So the roundtrip distance of $u$ and $v$ in $E_0$ is at most $d(u\leftrightarrows t)+d(t\leftrightarrows v)\leq (2k-1)d(u\leftrightarrows v)$.
\end{proof}

\subsection{Approximating a Length Interval}
Instead of approximating all roundtrip distances at once, we start with an easier subproblem of approximating all pairs of vertices whose roundtrip distances are within an interval $[L/(1+\epsilon), L)$.
Parameter $\epsilon$ is a real number in $(0,1/(2k-2)]$. The procedure \verb;Cover;($G,k,L,\epsilon$) described in Algorithm~\ref{alg:cover} will return a set of edges which gives a $(2k-2)(1+\epsilon)$-approximation of roundtrip distance $d(u\leftrightarrows v)$ if $R(u)/(k-1)>d(u\leftrightarrows v)$, for $d(u\leftrightarrows v)\in [L/(1+\epsilon), L)$.

\begin{algorithm}
\caption{Cover($G(V,E), k, L, \epsilon$)\label{alg:cover}}
\begin{algorithmic}[1]
\State $U\gets V, \hat{E}=\emptyset$
\While{$U\neq \emptyset$}
\State $u \gets \arg\max_{u\in U}R(u)$
\State $step \gets \min\{R(u)/(k-1), L\}$
\State $h\gets$ minimum positive integer satisfying $|Ball_U(u,h\cdot step)|< n^{h/k}$ \label{lin:select-h}
\State Add \Call{InOutTrees}{$U,u,h\cdot step$} to $\hat{E}$
\State Remove $\overline{Ball}_U(u, (h-1)step)$ from $U$
\EndWhile
\State \Return $\hat{E}$
\end{algorithmic}
\end{algorithm}

Note that in Algorithm~\ref{alg:cover}, initially $U=V$ and the balls are considered in $G[U]=G$. In the end of every iteration we remove a ball from $U$, and the following balls are based on the roundtrip distances in $G[U]$. However, $R(u)$ does not need to change during the algorithm and can still be based on roundtrip distances in the original graph $G$. The analysis for the size of the returned set $\hat{E}$ and the stretch are as follows.
\begin{lemma}
\label{lem:size}
The returned edge set of \verb;Cover;($G,k,L,\epsilon$) has $O(n^{1+1/k})$ size.
\end{lemma}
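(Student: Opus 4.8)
The plan is to charge the edges added in each iteration to the vertices that are removed from $U$ at the end of that iteration, and to show that the per-vertex charge is $O(n^{1/k})$ amortized. In a single iteration with center $u$, we add \Call{InOutTrees}{$U,u,h\cdot step$}, which is a pair of trees spanning $Ball_U(u,h\cdot step)$, hence contributes at most $2|Ball_U(u,h\cdot step)|$ edges. By the choice of $h$ on line~\ref{lin:select-h} as the \emph{minimum} positive integer with $|Ball_U(u,h\cdot step)| < n^{h/k}$, we have $|Ball_U(u,h\cdot step)| < n^{h/k}$, so the iteration adds fewer than $2n^{h/k}$ edges.

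Next I would lower-bound the number of vertices removed. Since $h$ is minimal, either $h=1$, or $h\ge 2$ and $|Ball_U(u,(h-1)step)|\ge n^{(h-1)/k}$; in the latter case $|\overline{Ball}_U(u,(h-1)step)|\ge |Ball_U(u,(h-1)step)|\ge n^{(h-1)/k}$, and this entire set is removed from $U$. So for $h\ge 2$ the iteration removes at least $n^{(h-1)/k}$ vertices while adding at most $2n^{h/k}$ edges, a ratio of $2n^{1/k}$ edges per removed vertex. For the $h=1$ case the iteration adds at most $2n^{1/k}$ edges and removes at least one vertex (namely $u$ itself, since $\overline{Ball}_U(u,0)\ni u$), again at most $2n^{1/k}$ edges per removed vertex. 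Because every vertex of $V$ is removed exactly once over the course of the while loop, summing over all iterations gives at most $2n^{1/k}\cdot n = 2n^{1+1/k}$ edges in total, i.e.\ $O(n^{1+1/k})$.

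The one point needing a little care — and the step I would expect to be the main obstacle — is verifying that $h$ is always well defined, i.e.\ that the minimality argument is not vacuous. We need some positive integer $h$ with $|Ball_U(u,h\cdot step)| < n^{h/k}$; since $step>0$ (as $R(u)\ge 1$ and $L\ge 1$) and $|Ball_U(u,k\cdot step)|\le |U|\le n = n^{k/k}$, the choice $h=k$ works in the worst case, so $h\le k$ and in particular $h$ exists. (Strictness: $Ball_U$ uses a strict inequality in its definition, so one should check the edge case where $|Ball_U(u,k\cdot step)| = n$ exactly; but then $|Ball_U(u,j\cdot step)|<n=n^{k/k}$ cannot hold at $j=k$, so instead one takes the trivial bound that all of $U$ is spanned, contributing $\le 2n$ edges while removing $\ge n^{(k-1)/k}$ vertices — still within the $O(n^{1/k})$ per-vertex budget. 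Alternatively, and more cleanly, note $|Ball_U(u,(k+1)\cdot step)|\le n < n^{(k+1)/k}$, so $h\le k+1$ always, and the same amortized bound goes through with the constant absorbed.) With $h$ well defined and $\le k+1$, the charging argument above is routine and yields the claimed $O(n^{1+1/k})$ bound.
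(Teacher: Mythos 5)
Your proof is correct and takes essentially the same approach as the paper: charge the edges of the two shortest-path trees spanning $Ball_U(u,h\cdot step)$ (at most $2n^{h/k}$ of them) to the at-least-$n^{(h-1)/k}$ vertices of $\overline{Ball}_U(u,(h-1)step)$ removed in that iteration, giving an amortized $O(n^{1/k})$ per removed vertex, and sum over all $n$ removals. Your side-remark on the existence of $h$ is sound but more roundabout than necessary; the paper's Lemma~\ref{lem:h} (which uses $step\le R(u)/(k-1)$) already gives $h\le k-1$ directly.
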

\begin{proof}
When processing a vertex $u$, by the selection of $h$ in line~\ref{lin:select-h}, $|Ball_U(u,h\cdot step)|< n^{h/k}$ and $|\overline{Ball}_U(u,(h-1)step)|\ge n^{(h-1)/k}$. When $h\ge 2$ it is because of $h$'s minimality, and when $h=1$ it is because $u\in \overline{Ball}_U(u,0)$.
So each time \verb;InOutTrees; is called, the size of ball to build shortest path trees is no more than $n^{1/k}$ times the size of ball to remove.
During an execution of \verb;Cover;($G,k,L,\epsilon$), each vertex is removed
once from $U$.
Therefore the total number of edges added in $\hat{E}$ is $O(n^{1+1/k})$.
\end{proof}

We can also see that if the procedure \verb;Cover;($G[U],k,L,\epsilon$) is run on a subgraph $G[U]$ induced on a subset $U\subseteq V$, then the size of $\hat{E}$ is bounded by $O(|U|n^{1/k})$. It is also easy to see that $h$ is at most $k-1$:
\begin{lemma}\label{lem:h}
The $h$ selected at line~\ref{lin:select-h} in \verb;Cover;($G,k,L,\epsilon$) satisfies $h\leq k-1$.
\end{lemma}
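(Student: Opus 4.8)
The plan is to bound $h$ by tracking how the ball size grows in powers of $n^{1/k}$, just as in the proof of Lemma~\ref{lem:size}, and then invoke the definition of $R(u)$ together with the choice of $u$ as the vertex of maximum radius. The key point is that $step \le R(u)/(k-1)$, so $(k-1)\cdot step \le R(u)$, which means $Ball_U(u,(k-1)step) \subseteq Ball(u,R(u))$ (here we also use that roundtrip distances in $G[U]$ are at least those in $G$, so a vertex inside the ball in $G[U]$ is inside the corresponding ball in $G$). By the definition of $R(u)$, $|Ball(u,R(u))| < n^{1-1/k} = n^{(k-1)/k}$. Hence $|Ball_U(u,(k-1)step)| < n^{(k-1)/k}$.

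Now I would argue that this forces $h \le k-1$. Suppose for contradiction that $h \ge k$. Then by the minimality of $h$ in line~\ref{lin:select-h}, the integer $k-1$ (which is a positive integer since $k \ge 2$) does not satisfy the defining inequality, i.e. $|Ball_U(u,(k-1)step)| \ge n^{(k-1)/k}$. This directly contradicts the bound $|Ball_U(u,(k-1)step)| < n^{(k-1)/k}$ derived in the previous paragraph. Therefore $h \le k-1$.

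I expect the only subtlety — and hence the main thing to state carefully — is the inclusion $Ball_U(u,(k-1)step) \subseteq Ball(u,R(u))$: one must note both that $(k-1)step \le R(u)$ by the definition of $step$, and that passing from the induced subgraph $G[U]$ to $G$ can only decrease roundtrip distances, so membership in the $G[U]$-ball implies membership in the $G$-ball of the same (or larger) radius. Everything else is immediate from the minimality of $h$ and the standing assumption $k \ge 2$. Note also the edge case: if $step = R(u)/(k-1)$ is the active term (as opposed to $L$), the inclusion is as above; if instead $step = L \le R(u)/(k-1)$, then $(k-1)step = (k-1)L \le R(u)$ still holds, so the same argument goes through unchanged.
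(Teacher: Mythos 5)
Your proof is correct and follows essentially the same route as the paper's: you observe that $(k-1)\,step \le R(u)$, that passing from $G[U]$ to $G$ cannot increase ball sizes, and that $|Ball(u,R(u))| < n^{1-1/k}$ by the definition of $R(u)$, which together give $|Ball_U(u,(k-1)step)| < n^{(k-1)/k}$ and hence $h \le k-1$. The only cosmetic difference is that you phrase the last step as a proof by contradiction via minimality of $h$, whereas the paper concludes directly; the underlying content is identical.
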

\begin{proof}
In $G[U]$, the ball $Ball_U(u,(k-1)step)$ must have size no greater than $Ball(u,(k-1)step)$ since the distances in $G[U]$ cannot decrease while some vertices are removed. Since $|Ball(u,R(u))|<n^{1-1/k}$ and $step\leq R(u)/(k-1)$, we get $|Ball_U(u,(k-1)step)|\leq |Ball(u,(k-1)step)|<n^{1-1/k}$, thus $h\leq k-1$.
\end{proof}

Next we analyze the roundtrip distance stretch in $\hat{E}$. Note that in order to make the final stretch $2k-1$, for the roundtrip distance approximated by edges in $\hat{E}$ we can make the stretch $(2k-2)(1+\epsilon)$, but for the roundtrip distance approximated by $E_0$ we need to make the stretch at most $2k-1$ as $E_0$ stays the same.
\begin{lemma}
\label{lem:stretch}
For any pair of vertices $u,v$ such that $d(u\leftrightarrows v)\in[L/(1+\epsilon), L)$, either \verb;Cover;($G,k,L,\epsilon$)'s returned edge set $\hat{E}$ can form a cycle passing through $u,v$ with length at most $(2k-2)(1+\epsilon)d(u\leftrightarrows v)$,
or $R(u)\leq (k-1)d(u\leftrightarrows v)$, in which case the $E_0$ built in Section~\ref{sec:preprocess} can form a detour cycle with length at most $(2k-1)d(u\leftrightarrows v)$ by Lemma~\ref{lem:large-distance}.
\end{lemma}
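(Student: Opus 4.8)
The plan is to argue by a dichotomy on the size of $R(u)$ relative to $d(u\leftrightarrows v)$. If $R(u)\le (k-1)\,d(u\leftrightarrows v)$, then $d(u\leftrightarrows v)\ge R(u)/(k-1)$ and Lemma~\ref{lem:large-distance} immediately gives the second alternative in the statement: $E_0$ already contains a roundtrip path between $u$ and $v$ of length at most $(2k-1)\,d(u\leftrightarrows v)$. So to prove the disjunction it suffices to treat the complementary case $R(u)>(k-1)\,d(u\leftrightarrows v)$, in which I must exhibit the claimed cycle inside the edge set $\hat E$ returned by \texttt{Cover}$(G,k,L,\epsilon)$.

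In that case I would fix a shortest roundtrip cycle $C$ through $u$ and $v$ (so $C$ passes through both and has total length $d(u\leftrightarrows v)<L$) and then examine the first iteration of the \texttt{while} loop of \texttt{Cover} in which some vertex of $C$ is removed from $U$; call its center $w$, its step value $step$, and its integer $h$ (as in line~\ref{lin:select-h}). At the beginning of this iteration no vertex of $C$ has been removed yet, so $C$ together with all its edges still lies inside $G[U]$, and in particular $u\in U$. Since $w=\arg\max_{x\in U}R(x)$, this forces $R(w)\ge R(u)>(k-1)\,d(u\leftrightarrows v)$, and combined with $d(u\leftrightarrows v)<L$ we obtain $step=\min\{R(w)/(k-1),\,L\}>d(u\leftrightarrows v)$. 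I expect this step — pinning down the first iteration that touches $C$ and exploiting the ``largest $R$ first'' order to force the step there to exceed $d(u\leftrightarrows v)$ — to be the conceptual crux; the rest is bookkeeping.

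From there the trees span all of $C$. Choose a vertex $z\in C$ deleted in this iteration, so $z\in\overline{Ball}_U(w,(h-1)step)$, i.e.\ $d_U(w\leftrightarrows z)\le (h-1)step$. For any vertex $a$ of $C$, traversing $C$ is a roundtrip walk in $G[U]$ through $z$ and $a$, so $d_U(z\leftrightarrows a)\le d(u\leftrightarrows v)<step$ (the length of $C$ being $d(u\leftrightarrows v)$), whence $d_U(w\leftrightarrows a)\le (h-1)step+d(u\leftrightarrows v)<h\cdot step$ and thus $a\in Ball_U(w,h\cdot step)$. Applying this to $a=u$ and $a=v$, the inward and outward shortest path trees rooted at $w$ added to $\hat E$ in this iteration contain shortest paths $w\to u$, $u\to w$, $w\to v$ and $v\to w$ in $G[U]$; concatenating $u\to w\to v$ with $v\to w\to u$ produces a cycle through $u$ and $v$ consisting only of edges of $\hat E$, of length $d_U(u\leftrightarrows w)+d_U(w\leftrightarrows v)\le 2\big((h-1)step+d(u\leftrightarrows v)\big)$.

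It remains to turn this into the advertised bound. Using $step\le L\le (1+\epsilon)\,d(u\leftrightarrows v)$ (from $d(u\leftrightarrows v)\ge L/(1+\epsilon)$) and $h\le k-1$ from Lemma~\ref{lem:h}, the cycle length is at most $\big(2(k-2)(1+\epsilon)+2\big)\,d(u\leftrightarrows v)=\big((2k-2)+(2k-4)\epsilon\big)\,d(u\leftrightarrows v)\le (2k-2)(1+\epsilon)\,d(u\leftrightarrows v)$, which is the first alternative. The only routine points I would double-check are that distances in $G[U]$ along the intact cycle $C$ never drop below $d(u\leftrightarrows v)$ for any two of its vertices (this is what licenses $d_U(z\leftrightarrows a)\le d(u\leftrightarrows v)$), and the degenerate case $h=1$, where $\overline{Ball}_U(w,0)=\{w\}$ so $z=w$ and the length bound collapses to $2\,d(u\leftrightarrows v)\le (2k-2)(1+\epsilon)\,d(u\leftrightarrows v)$ since $k\ge 2$.
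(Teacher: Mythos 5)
Your proof is correct and follows essentially the same approach as the paper: fix a shortest cycle through $u,v$, look at the first iteration in which a cycle vertex is removed, and use the ``largest $R$ first'' order to force $step>d$ when the $E_0$ alternative does not apply, whence the ball centered at the chosen vertex swallows all of $u,v$. The only cosmetic difference is that you dichotomize on $R(u)\lessgtr (k-1)d$ up front, whereas the paper dichotomizes on $step_c\lessgtr d$ and derives $R(u)\le (k-1)d$ as a consequence in the second case; the two are logically equivalent.
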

\begin{proof}
Consider any pair of vertices $u,v$ with roundtrip distance $d=d(u\leftrightarrows v)\in[L/(1+\epsilon), L)$, and a shortest cycle $P$ going through $u,v$ with length $d$.

During \verb;Cover;($G,k,L,\epsilon$), consider the vertices on $P$ that are first removed from $U$. Suppose $w$ is one of the first removed vertices, and $w$ is removed as a member of $\overline{Ball}_{U_c}(c,(h_c-1)step_c)$ centered at $c$.
This is to say $d_{U_c}(c\leftrightarrows w)\leq (h_c-1)step_c$.

Case 1: $step_c> d$. Then \[d_{U_c}(c\leftrightarrows u)\le d_{U_c}(c\leftrightarrows w)+d_{U_c}(w\leftrightarrows u)\leq (h_c-1)step_c+d< h_c step_c,\] and $u\in Ball_{U_c}(c, h_c step_c)$.
The second inequality holds because $U_c$ is the remaining vertex set before removing $w$, so by definition of $w$, all vertices on $P$ are in $U_c$.
Symmetrically $v\in Ball_{U_c}(c,h_c step_c)$.
\verb;InOutTrees;($U_c,c,h_c step_c)$ builds a detour cycle passing through $u,v$ with length $<2h_c step_c$.
By Lemma~\ref{lem:h}, we have $h_c\leq k-1$. Also $step_c\leq L\leq (1+\epsilon)d$, therefore we build a detour of length $<2(k-1)step_c\le (2k-2)(1+\epsilon)d$ in $\hat{E}$.

Case 2: $step_c\leq d$.
Because $d<L$, this case can only occur when $step_c=R(c)/(k-1)$.
Because $c$ is chosen before $u$, $R(u)\le R(c)=(k-1)step_c\leq (k-1)d$.
By Lemma~\ref{lem:large-distance}, $E_0$ can give a $(2k-1)$-approximation of $d$.


\end{proof}

\subsection{Main Construction}
\label{sec:mainW}
Now we can proceed to prove 
the main theorem based on a scaling on lengths of the cycles from $1$ to $2nW$.

\begin{theorem}
\label{the:mainW}
For any directed graph $G$ with real edge weights in $[1,W]$,
there exists a polynomial time constructible $(2k-1)$-roundtrip spanner of $G$ with $O(k n^{1+1/k}\log(nW))$ edges.
\end{theorem}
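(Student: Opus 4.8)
The plan is to combine the preprocessing step of Section~\ref{sec:preprocess} with repeated applications of \verb;Cover; over a geometric scale of lengths. First I would fix $\epsilon = 1/(2k-2)$ (or any value in $(0, 1/(2k-2)]$), and for each index $i = 0, 1, \ldots, \lceil \log_{1+\epsilon}(2nW) \rceil$ set $L_i = (1+\epsilon)^i$; call \verb;Cover;$(G, k, L_i, \epsilon)$ and let $\hat{E}_i$ be the returned edge set. The final spanner is $H = (V, E_0 \cup \bigcup_i \hat{E}_i)$, where $E_0$ is the hitting-set tree edge set built in Section~\ref{sec:preprocess}.

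For the size bound: $E_0$ contributes $O(n^{1+1/k}\log n)$ edges as already noted, and by Lemma~\ref{lem:size} each $\hat{E}_i$ has $O(n^{1+1/k})$ edges (I should double-check that the implicit constant does not hide a factor of $k$; if it does, the claimed bound $O(kn^{1+1/k}\log(nW))$ absorbs it). There are $O(\log_{1+\epsilon}(nW)) = O(k\log(nW))$ scales since $1/\epsilon = \Theta(k)$, so the total is $O(kn^{1+1/k}\log(nW)) + O(n^{1+1/k}\log n) = O(kn^{1+1/k}\log(nW))$, as claimed. The construction time: each \verb;Cover; call is dominated by the roundtrip-ball and shortest-path-tree computations, which cost $\tilde O(mn)$ overall across the whole run (shortest paths from every vertex), and $E_0$ likewise costs $\tilde O(mn)$; multiplying by the $O(k\log W)$ scales gives $\tilde O(kmn\log W)$.

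For the stretch: take any pair $u, v$ with $d(u \leftrightarrows v) \geq 1$ (weights are at least $1$, so all nonzero roundtrip distances are at least $2$, certainly $\geq 1$), and note $d(u \leftrightarrows v) \leq 2nW$ since a shortest roundtrip path is a cycle using at most $2n$ edges each of weight at most $W$. Hence there is an index $i$ with $d(u \leftrightarrows v) \in [L_i/(1+\epsilon), L_i)$. Apply Lemma~\ref{lem:stretch} to this $i$: either $\hat{E}_i$ contains a cycle through $u, v$ of length at most $(2k-2)(1+\epsilon)d(u\leftrightarrows v) = (2k-2)(1 + \tfrac{1}{2k-2})d(u\leftrightarrows v) = (2k-1)d(u\leftrightarrows v)$, or $R(u) \leq (k-1)d(u \leftrightarrows v)$ and then Lemma~\ref{lem:large-distance} gives a cycle in $E_0$ of length at most $(2k-1)d(u\leftrightarrows v)$. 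In either case $d_H(u \leftrightarrows v) \leq (2k-1)d(u\leftrightarrows v)$, so $H$ is a $(2k-1)$-roundtrip spanner.

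The main obstacle I anticipate is not any single step but making sure the two stretch regimes dovetail exactly: Lemma~\ref{lem:stretch}'s first alternative only gives the $(2k-2)(1+\epsilon)$ bound, and it is precisely the choice $\epsilon \le 1/(2k-2)$ that pushes this down to $2k-1$ and no more — so I must be careful that \verb;Cover; is only ever invoked with $\epsilon$ in the admissible range $(0, 1/(2k-2)]$, and that the $E_0$ alternative, which is frozen across all scales, never needs a factor better than $2k-1$ (it does not, by Lemma~\ref{lem:large-distance}). A secondary point to verify is the edge-count bookkeeping over scales, namely that $\log_{1+\epsilon}(2nW) = \Theta(k\log(nW))$, which is where the leading factor of $k$ in the size bound comes from.
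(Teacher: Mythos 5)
Your proposal is correct and matches the paper's proof essentially step for step: same choice $\epsilon = 1/(2k-2)$, same geometric scaling of lengths up to $2nW$, same invocation of Lemma~\ref{lem:size} for the edge count and Lemma~\ref{lem:stretch} plus Lemma~\ref{lem:large-distance} for the two stretch alternatives. The only cosmetic difference is your scale-index convention (and you might want $\lfloor\log_{1+\epsilon}(2nW)\rfloor + 1$ rather than $\lceil\cdot\rceil$ as the top index to be safe when the log is an exact integer), but that is bookkeeping, not a gap.
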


\begin{proof}
Note that the roundtrip distance between any pair of vertices must be in the range $[1,2(n-1)W]$.
First do the preprocessing in Section \ref{sec:preprocess}.
Then divide the range of roundtrip distance $[1,2nW)$ into intervals $[(1+\epsilon)^{p-1}, (1+\epsilon)^p)$, where $\epsilon=1/(2k-2)$.
Call \verb;Cover;($G, k, (1+\epsilon)^p, \epsilon$) for $p=0,\cdots,\lfloor \log_{1+\epsilon}(2nW) \rfloor+1$, and merge all returned edges with $E_0$ to form a spanner.

First we prove that the edge size is $O(kn^{1+1/k}\log(nW))$.
Preprocessing adds $O(n^{1+1/k}\cdot \log n)$ edges.
\verb;Cover;($G, k, (1+\epsilon)^p, \epsilon$) is called for $\log_{1+1/(2k-2)}(2nW)=O(k\log(nW))$
times.
By Lemma~\ref{lem:size}, each call generates $O(n^{1+1/k})$ edges.
So the total number of edges in the roundtrip spanner is $O(kn^{1+1/k}\log(nW))$.

Next we prove the stretch is $2k-1$.
For any pair of vertices $u,v$ with roundtrip distance $d$, let $p=\lfloor \log_{1+\epsilon} d \rfloor+1$, then $d\in[(1+\epsilon)^{p-1}, (1+\epsilon)^p)$.
By Lemma \ref{lem:stretch}, either the returned edge set of \verb;Cover;($G,k,(1+\epsilon)^p, \epsilon$) can form a detour cycle passing through $u,v$ of length at most $(2k-2)(1+\epsilon)d=(2k-1)d$, or the edges in $E_0$ can form a detour cycle passing through $u,v$ of length at most $(2k-1)d$.

In conclusion this algorithm can construct a $(2k-1)$-roundtrip spanner with $O(kn^{1+1/k}\cdot\log(nW))$ edges.
\end{proof}

\subsection{Construction Time}
\label{sec:timeW}
The running time of the algorithm in the proof of Theorem~\ref{the:mainW} is $O(kn(m+n\log n)\log(nW))$. It is also easy to see that the algorithm is deterministic.
Next we analyze construction time in detail.

In preprocessing, for any $u\in V$, $R(u)$ can be calculated by running Dijkstra searches with Fibonacci heap \cite{Fredman1987} starting at $u$, so calculating $R(\cdot)$ takes $O(n(m+n\log n))$ time.
Finding $H$ takes $O(n^{2-1/k})$ time by Theorem~\ref{the:hitset}. Building $E_0$ takes $O(n^{1/k}\log n\cdot(m+n\log n))$ time.

A \verb;Cover; call's while loop runs at most $n$ times since each time at least one node is removed.
In a loop, $u$ can be found in $O(n)$ time, and all other operations regarding roundtrip balls can be done in $O(m+n\log n)$ time by Dijkstra searches starting at $u$ on $G[U]$.
Therefore a \verb;Cover; call takes $O(n(m+n\log n))$ time.

\verb;Cover; is called $O(k\log(nW))$ times.
Combined with the preprocessing time, the total construction time is $O(kn(m+n\log n)\log(nW))$.

\section{Removing the Dependence on $W$}\label{sec:strongly}
In this section we prove Theorem~\ref{the:main2}. The size of the roundtrip spanner in Section~\ref{sec:algo} is dependent on the maximum edge weight $W$. In this section we remove the dependence by designing the scaling approach more carefully. Our idea is similar to that in~\cite{Roditty:2008}. When we consider the roundtrip distances in the range $[L/(1+\epsilon), L)$, all cycles with length $\leq L/n^3$ have little effect so we can contract them into one node, and all edges with length $>(2k-1)L$ cannot be in any $(2k-1)L$ detour cycles, so they can be deleted. Thus, an edge with length $l$ can only be in  $O(\log_{1+\epsilon} n)$ iterations for $L$ between $l/(2k-1)$ and $l\cdot n^3$ (based on the girth of this edge). However, the stretch will be a little longer if we directly apply the algorithm in Section~\ref{sec:algo} on the contracted graph.

To overcome this obstacle, we only apply the vertex contraction when $R(u)$ is large (larger than $2(k-1)L$). By making the ``step'' a little larger than $L$ and $\epsilon$ smaller, when $d < L<step$, the stretch is still bounded by $(2k-1)$. When $R(u)\leq 2(k-1)L$, we first delete all node $v$ with $R(v) < L/8$, then simply apply the algorithm in Section~\ref{sec:algo} in the original graph. Since every node $u$ can only be in the second part when $R(u)/2(k-1) \leq L\leq 8R(u)$, the number of edges added in the second part is also strongly polynomial.

First we define the girth of an edge:

\begin{definition}
We define the \emph{girth of an edge $e$} in $G$ to be the length of shortest directed cycle containing $e$, and denote it by $g(e)$.
\end{definition}

It is easy to see that for $e=(u,v)$, $d(u\leftrightarrows v)\leq g(e)$. In $O(n(m+n\log n))$ time we can compute $g(e)$ for all edges $e$ in $G$~\cite{Fredman1987}. 

Algorithm~\ref{alg:1} approximates roundtrip distance $d(u\leftrightarrows v)\in[L/(1+\epsilon),L)$. 
In the $p$-th iteration of the algorithm, $G_p[U_p]$ is always the subgraph contracted from the subgraph $G[U]$. Given $v_p\in U_p$, let $C(v_p)$ be the set of vertices in $U$ that are contracted into $v_p$. We can see the second part of this algorithm (after line~\ref{line:rm-small-R}) is the same as Algorithm~\ref{alg:cover} in Section~\ref{sec:algo}.

For the contracted subgraph $G_p[U_p]$, we give new definitions for balls and \verb;InOutTrees;.
Given two vertices $u_p,v_p\in U_p$, define 
\[\hat{d}_{U_p}(u_p,v_p)=\min_{u\in C(u_p),v\in C(v_p)} d_U(u,v)\]
Balls in $G_p[U_p]$ are defined as follows.
    \[Ball_{U_p}(u_p,r)=\{v_p\in U_p : \hat{d}_{U_p}(u_p,v_p) < r\}\]
    \[\overline{Ball}_{U_p}(u_p,r)=\{v_p\in U_p : \hat{d}_{U_p}(u_p,v_p)\le r\}\]
In Line~\ref{line:inouttrees}, \verb;NewInOutTrees;$(U_p,u_p,h\cdot step)$ is formed by only keeping the edges between different contracted vertices in \verb;InOutTrees;$(U,u,h\cdot step)$ from $u$ (see Line~\ref{line:select_u}). In the inward tree or outward tree of \verb;InOutTrees;$(U,u,h\cdot step)$, if after contraction there are multiple edges from or to a contracted vertex, respectively, only keep one of them. We can see the number of edges added to $\hat{E}$ is bounded by $O(|Ball_{U_p}(u_p,h\cdot step)|)$. Also in $U_p$, the roundtrip distance from $u_p$ to vertices in $Ball_{U_p}(u_p,h\cdot step)$ by edges in \verb;NewInOutTrees;$(U_p,u_p,h\cdot step)$ is at most $h\cdot step$.

In line~\ref{line:delete-long}, we can delete long edges since obviously they cannot be included in $\hat{E}$.

\begin{algorithm}[tb]
	\caption{Cover2$(G,k,p,\epsilon)$}\label{alg:1}
	\begin{algorithmic}[1]
	    \State{$L\gets (1+\epsilon)^p$}
	    \State {Contract all edges $e$ with $g(e)\le L/n^3$ in $G$ to form a graph $G_p$, let its vertex set be $V_p$}
	    \State {(Delete edges $e$ with $g(e)>2(k-1)L$ from $G_p$)} \label{line:delete-long}
		\State {$U\gets V, U_p\gets V_p, \hat{E}\gets\emptyset$}
		\While {$U\neq\emptyset$ and $\max_{u\in U} R(u)\geq 2(k-1)L$}
		    \State {$u\gets \arg\max_{u\in U} R(u)$, let $u_p$ be the corresponding vertex in $U_p$}\label{line:select_u}
			\State {$step\gets (1+1/n^2)L$}\label{line:step}
			\State {$h\gets$ minimum positive integer satisfying $|Ball_{U_p}(u_p,h\cdot step)|<n^{h/k}$}\label{line:ball}
			\State {Add \Call{NewInOutTrees}{$U_p,u_p,h\cdot step$} to $\hat{E}$}\label{line:inouttrees}
			\State {Remove $\overline{Ball}_{U_p}(u_p,(h-1)step)$ from $U_p$, remove corresponding vertices from $U$}\label{line:remove}
		\EndWhile
		\State {Remove all vertices $u$ from $U$ with $R(u) < L/8$}\label{line:rm-small-R}
		\While {$U\neq\emptyset$}
		    \State {$u\gets \arg\max_{u\in U} R(u)$}
		    \State {$step\gets \min\{R(u)/(k-1),L\}$}\label{line:step2}
			\State {$h\gets$ minimum positive integer satisfying $|Ball_U(u,h\cdot step)|<n^{h/k}$}\label{line:ball2}
			\State {Add \Call{InOutTrees}{$U,u,h\cdot step$} to $\hat{E}$}\label{line:inouttrees2}
			\State {Remove $\overline{Ball}_{U}(u,(h-1)step)$ from $U$}\label{line:remove2}
		\EndWhile
		\State \Return {$\hat{E}$}
	\end{algorithmic}
\end{algorithm}

The main algorithm is shown in Algorithm~\ref{alg:2}.

\begin{algorithm}[ht]
    \caption{Spanner$(G(V,E),k)$}\label{alg:2}
    \begin{algorithmic}[1]
        \State {Do the preprocessing in Section~\ref{sec:preprocess}. Let $E_0$ be the added edges}
        \State {$\epsilon\gets \frac{1}{4(k-1)}$.}
        \State {$\hat{E}\gets E_0$}
        \For {$p\gets 0$ to $\lfloor\log_{1+\epsilon}(2nW)\rfloor+1$}
            
            \State{$\hat{E}\gets\hat{E}\cup$ \Call{Cover2}{$G,k,p,\epsilon$}}
        \EndFor
        \State \Return {$H(V,\hat{E})$}
    \end{algorithmic}
\end{algorithm}

\begin{lemma}
    For $k\leq n$ and $n\geq 12$, Algorithm \emph{Spanner}$(G,k)$ constructs a $(2k-1)$-roundtrip spanner of $G$.
\end{lemma}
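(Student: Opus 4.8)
The plan is to show correctness of \verb;Spanner;$(G,k)$ by verifying two things: the returned subgraph $H$ is a subgraph of $G$ on the same vertex set (immediate, since every edge added is either in $E_0$ or returned by \verb;Cover2;, and \verb;NewInOutTrees; keeps only edges of $G$ between distinct contracted vertices, which correspond to edges of $G$), and that for every pair $u,v$ the roundtrip distance in $H$ is at most $(2k-1)\,d(u\leftrightarrows v)$. The stretch bound is the real content. Fix $u,v$, let $d=d(u\leftrightarrows v)$, and let $p=\lfloor\log_{1+\epsilon} d\rfloor+1$ so that $d\in[L/(1+\epsilon),L)$ with $L=(1+\epsilon)^p$ and $\epsilon=\frac1{4(k-1)}$. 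I would split on whether $R(u)\le (k-1)d$: in that case Lemma~\ref{lem:large-distance} already gives a detour cycle of length at most $(2k-1)d$ inside $E_0\subseteq\hat E$, and we are done. So assume $R(u)>(k-1)d$.

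Now I would trace through \verb;Cover2;$(G,k,p,\epsilon)$ and locate where a vertex of the shortest roundtrip cycle $P$ through $u,v$ (of length $d$) is first removed from $U$. There are two removal loops. \textbf{First loop (contracted graph).} Here $\max_{u'\in U}R(u')\ge 2(k-1)L$ and $step=(1+1/n^2)L$. Since $g(e)\le L/n^3$ for contracted edges, contracting them changes roundtrip distances along $P$ by at most (number of such edges)$\cdot L/n^3\le n\cdot L/n^3 = L/n^2$, so the contracted images $u_p,v_p$ satisfy $\hat d_{U_p}(u_p,v_p)\le d + L/n^2$. If $w$ on $P$ is first removed as a member of $\overline{Ball}_{U_p}(c_p,(h-1)step)$, then by the triangle inequality for $\hat d$ (and the fact that all of $P$ still lies in $U$ at that moment, so their contracted images lie in $U_p$), both $u_p,v_p\in Ball_{U_p}(c_p,h\cdot step)$ because $(h-1)step + (d+L/n^2) < h\cdot step$ — this uses $step = (1+1/n^2)L > L > d$ and the slack $L/n^2$ being exactly absorbed by $step - L$. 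Then \verb;NewInOutTrees;$(U_p,c_p,h\cdot step)$ gives, after lifting contracted vertices back (each contracted ball has roundtrip diameter $<2L/n^2$ inside $\hat E$, or we simply note the cost of routing within a contracted blob is at most $n\cdot L/n^3$), a detour cycle through $u,v$ in $\hat E$ of length at most $2h\cdot step + O(L/n^2)$. By Lemma~\ref{lem:h} (whose proof applies verbatim to the contracted balls, since contraction only shrinks distances and hence ball sizes, and $step\le R(c)/(k-1)$ when... ) we need $h\le k-1$; here I must be careful — $step=(1+1/n^2)L$ and we are in the branch $R(c)\ge 2(k-1)L$, so $(k-1)step=(k-1)(1+1/n^2)L\le 2(k-1)L\le R(c)$ for $n\ge 2$, hence $|Ball_{U_p}(c_p,(k-1)step)|\le|Ball(c,(k-1)step)|\le|Ball(c,R(c))|<n^{1-1/k}$, giving $h\le k-1$. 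Therefore the detour has length at most $2(k-1)(1+1/n^2)L + O(L/n^2)$. Using $L\le(1+\epsilon)d$ and $\epsilon=\frac1{4(k-1)}$, this is at most $(2k-2)(1+\epsilon)d\cdot(1+O(1/n^2)) \le (2k-1)d$ once $n\ge12$ and $k\le n$ make the $(1+1/n^2)$ and lifting overheads fit inside the gap between $(2k-2)(1+\epsilon)=(2k-2)+\frac12$ and $2k-1$. \textbf{Second loop (uncontracted).} This is literally Algorithm~\ref{alg:cover}, so Lemma~\ref{lem:stretch}'s argument applies: if $w\in P$ is first removed here, either $step_c>d$ and we get a detour of length $<2(k-1)step_c\le 2(k-1)L\le(2k-2)(1+\epsilon)d\le(2k-1)d$, or $step_c=R(c)/(k-1)\le d$ giving $R(u)\le R(c)\le(k-1)d$, contradicting our standing assumption $R(u)>(k-1)d$ — unless $u$ itself was deleted at line~\ref{line:rm-small-R}. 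I would handle that last case separately: $u$ is deleted there only if $R(u)<L/8$, but then $R(u)<L/8\le(1+\epsilon)d/8<(k-1)d$ (for $k\ge2$), again contradicting the assumption; so in the surviving branch no vertex of $P$ is ever deleted by line~\ref{line:rm-small-R} before being removed by a ball, and the argument closes. (One must also check $P$ isn't destroyed by the edge deletion at line~\ref{line:delete-long}: edges $e$ on $P$ have $g(e)\le$ length of $P$'s enclosing cycle... more precisely $g(e)\le d < L < 2(k-1)L$, so they survive.)

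The main obstacle I expect is \textbf{bookkeeping the contraction slack}: making sure that (i) contracting edges of girth $\le L/n^3$ only perturbs the relevant roundtrip distances along $P$ and around balls by $O(L/n^2)$, (ii) the enlarged $step=(1+1/n^2)L$ exactly compensates so that $u_p,v_p$ still land in $Ball_{U_p}(c_p,h\cdot step)$ even though $d$ could be as large as $L^-$, (iii) lifting a \verb;NewInOutTrees; path back through contracted blobs adds only $O(L/n^2)$, and (iv) all these $1+O(1/n^2)$ factors, together with $(2k-2)(1+\epsilon)=(2k-2)+\tfrac12 = 2k-\tfrac32$, still sum to at most $2k-1$ — this is where the hypotheses $n\ge12$ and $k\le n$ get used, since the total overhead is roughly $(2k-2)\cdot O(1/n^2)+O(L/n^2\cdot(\text{dist})^{-1})=O(k/n^2)$ which is $\le\tfrac12$ precisely when $k\le n$ and $n$ is not too small. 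The girth-of-an-edge branching (line~\ref{line:delete-long} and the "$O(\log_{1+\epsilon}n)$ iterations per edge" remark) is only needed for the size bound, which is a separate lemma; for correctness I only need that no edge of $P$ is wrongly deleted, which follows from $g(e)\le d<2(k-1)L$. So the proof is: reduce to $R(u)>(k-1)d$; locate the first removed vertex of $P$; in the contracted loop push through the $step$-slack computation to get an $h\le k-1$ detour of length $\le(2k-1)d$; in the uncontracted loop invoke Lemma~\ref{lem:stretch} verbatim, ruling out its bad case and the line~\ref{line:rm-small-R} deletion by the standing assumption.
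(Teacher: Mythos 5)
Your overall skeleton (reduce to $R(u)>(k-1)d$, locate the first vertex of $P$ removed, split into contracted/uncontracted loop, push the $step$-slack through the triangle inequality, apply the Lemma~\ref{lem:stretch} template in the second loop) matches the paper's proof, and your accounting of the $(1+1/n^2)$ and $(1+\epsilon)$ factors against the budget $2k-1$ is the same computation. But there is one genuine gap and one soft spot.

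The gap is in your treatment of Line~\ref{line:rm-small-R}. That line deletes \emph{every} $w\in U$ with $R(w)<L/8$, not just $u$ or $v$. You correctly rule out $u$ (and symmetrically $v$) being deleted there, since $R(u)>(k-1)d\ge L/8$ under the standing assumption, and then conclude ``so in the surviving branch no vertex of $P$ is ever deleted by line~\ref{line:rm-small-R}.'' That conclusion does not follow: an intermediate vertex $w'$ on the shortest roundtrip cycle $P$ can perfectly well have $R(w')<L/8$, and then $w'$ is removed from $U$ at Line~\ref{line:rm-small-R} without ever being covered by a ball. Your ``first removed vertex of $P$'' argument then has nothing to latch onto in the second loop, since the removal that disconnected $P$ produced no \verb;InOutTrees; call. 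The paper handles this case separately and it is not optional: if $R(w')<L/8$ then the hitting set supplies $t\in H$ with $d(w'\leftrightarrows t)<L/8$, and already in $E_0$ the detour $u\to t\to v$ has roundtrip length $d(u\leftrightarrows t)+d(t\leftrightarrows v)\le d(u\leftrightarrows w')+d(w'\leftrightarrows v)+2d(w'\leftrightarrows t)< 2d+L/4\le 2d+(1+\epsilon)d/4<(2k-1)d$ for $k\ge 2$. Without this piece, your case analysis is not exhaustive.

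The soft spot is the ``lifting'' cost in the first loop. You write that routing within a contracted blob costs ``at most $n\cdot L/n^3$,'' but that is the cost in $G$, not in $\hat E$; inside $\hat E$ you only have a spanner of the blob, so the cost per contracted edge $e$ is up to $(2k-1)g(e)\le(2k-1)L/n^3$, for a total of $n(2k-1)L/n^3$. To justify even that you need the paper's induction on $p$: the inductive hypothesis that for smaller scales the pairs contracted into a single vertex are already $(2k-1)$-spanned in $\hat E$ (with base case ``for small $p$ nothing is contracted''). You gesture at a bound of the right order but do not set up this induction, and the factor $(2k-1)$ matters since it feeds into the final $(2k-3/2)(1+3/n^2)\le 2k-1$ computation that consumes the $n\ge 12$ hypothesis.
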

\begin{proof}
    For any pair of vertices $u,v$ with roundtrip distance $d=d(u\leftrightarrows v)$ on $G$, there exists a $p$, such that $d\in[(1+\epsilon)^{p-1}, (1+\epsilon)^p)$. Let $L=(1+\epsilon)^p$. If $R(u) \le (k-1)d$ or $R(v) \le (k-1)d$, by Lemma~\ref{lem:large-distance}, $E_0$ contains a roundtrip cycle between $u$ and $v$ with length at most $(2k-1)d$. So we assume $R(u) > (k-1)d$ and $R(v)>(k-1)d$. Also, if there is a vertex $w$ on the shortest cycle containing $u$ and $v$ with $R(w)<L/8$, then there will be a vertex $t\in H$ so that $d(w\leftrightarrows t)<L/8$, so the roundtrip distance in $E_0$ will be $d(u\leftrightarrows t)+d(t\leftrightarrows v)<L/4+2d\leq (1+\epsilon)d/4+2d<(2k-1)d$ for $k\geq 2$, so Line~\ref{line:rm-small-R} cannot impact the correctness.
    
    Consider the iteration $p$ of Algorithm~\ref{alg:1}, let $u_p,v_p$ be the contracted vertices of $u,v$ respectively. Let $P$ be a shortest cycle going through $u,v$ in $G$ and $P'$ be the contracted cycle going through $u_p,v_p$ in $G_p$. It is easy to see that each vertex on $P$ corresponds to some vertex on $P'$. Similar as Lemma~\ref{lem:h}, in Line~\ref{line:ball} and Line~\ref{line:ball2}, we have $(k-1)\cdot step \le R(u)$. It is easy to see that $|Ball_{U_p}(u_p,(k-1)\cdot step)|\le |Ball_U(u,(k-1)\cdot step)| < n^{1-1/k}$, which implies $h\le k-1$.
    
    We prove it by the induction on $p$. When $p$ is small, there is no contracted vertex in $G_p$. By the same argument as in Lemma~\ref{lem:stretch}, either \verb;Cover2;$(G,k,L,\epsilon)$'s returned edge set $\hat{E}$ contains a roundtrip cycle between $u$ and $v$ with length at most \[2h_c\cdot step_c\le 2(k-1)(1+1/n^2)(1+\epsilon)d = (2k-3/2)(1+1/n^2)d \le (2k-1)d\] ($k\ge 2$, $k\leq n$ and $n\geq 12$) since $step_c\le (1+1/n^2)L$ in Line~\ref{line:step} and Line~\ref{line:step2} and $h_c\le k-1$, or $E_0$ contains a cycle between $u$ and $v$ with length at most $(2k-1)d$.
    Next we assume that vertices of $G$ contracted in the same vertex in $G_p$ are already connected in $\hat{E}$, and has the $(2k-1)$-stretch.
    
    During \verb;Cover2;$(G,k,p,\epsilon)$, if some vertices in $P'$ are removed from $U_p$ in Line~\ref{line:remove}, like Lemma~\ref{lem:stretch}, suppose $w_p\in U_p$ is one of the first removed vertices, and $w_p$ is removed as a member of $\overline{Ball}_{U_c}(c,(h_c-1)step_c)$ centered at $c$. Let $w'\in C(w_p)$ be one vertex on $P$, since there are at most $n$ original vertices contracted and $step_c = (1+1/n^2)L$, we have $d_U(c\leftrightarrows u)\le d_{U_p}(c\leftrightarrows w_p) + d_U(w'\leftrightarrows u) + n\cdot L/n^3 \le (h_c-1)step_c + d_U(u\leftrightarrows v) + L/n^2 < (h_c-1)step_c + L + L/n^2 = h_c step_c$, and symmetrically $d_U(c\leftrightarrows v)<h_c step_c$. Thus \verb;NewInOutTrees;$(U_c,c,h_c step_c)$ builds a roundtrip cycle passing through $u_p,v_p$ of length $<2h_c step_c$ in current contracted graph. It follows that $d_{G_p[\hat{E}]}(u_p,v_p)< 2h_c step_c\le  2(k-1)(1+1/n^2)L$. Since there are at most $n$ contracted vertices in the roundtrip cycle between $u_p$ and $v_p$, and $w(e)\leq g(e)$ for every contracted edge $e$, we have \[d_{G[\hat{E}]}(u,v) \le 2(k-1)(1+1/n^2)L + n\cdot (2k-1) L/n^3\le (2k-3/2)(1+3/n^2)d \le (2k-1)d.\] ($k\ge 2$, $k\leq n$ and $n\geq 12$.)
    
    If there is no vertex in $P'$ removed from $U_p$ in Line~\ref{line:remove} and Line~\ref{line:rm-small-R}, then all vertices $w$ in $P$ have $L/8\leq R(w)<2(k-1)L$. By the same argument as in Lemma~\ref{lem:stretch}, the second part of Algorithm~\ref{alg:1} also ensures that $\hat{E}\cup E_0$ contains a roundtrip cycle passing through $u,v$ with length at most $(2k-1)d$.
\end{proof}

\begin{lemma}
    The subgraph returned by algorithm \emph{Spanner}$(G,k)$ has $O(kn^{1+1/k}\log n)$ edges.
\end{lemma}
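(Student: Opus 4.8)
The plan is to account separately for the two sources of edges in \verb;Spanner;$(G,k)$: the preprocessing edges $E_0$, and the edges returned by the $O(k\log(nW))$ calls to \verb;Cover2;. The preprocessing contributes $O(n^{1+1/k}\log n)$ edges by Theorem~\ref{the:hitset}, so the whole task reduces to bounding the total number of edges added across all \verb;Cover2; calls by $O(kn^{1+1/k}\log n)$ — crucially, with no $\log W$ factor. Within a single call \verb;Cover2;$(G,k,p,\epsilon)$, the two while-loops each remove every remaining vertex at most once and, exactly as in the proof of Lemma~\ref{lem:size}, the minimality of $h$ (together with $h\le k-1$) guarantees that each \verb;NewInOutTrees; or \verb;InOutTrees; invocation adds at most $n^{1/k}$ times as many edges as the number of vertices it removes. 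Hence a single \verb;Cover2; call adds $O(|U_p|\cdot n^{1/k})$ edges in its first loop and $O(|U|\cdot n^{1/k})$ in its second loop, i.e.\ $O(n^{1+1/k})$ in the worst case; the naive bound over all $p$ would give $O(kn^{1+1/k}\log(nW))$, which is what we must improve.

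The key idea — borrowed from~\cite{Roditty:2008} — is a charging argument that shows each \emph{original} edge of $G$ is only ``active'' in a bounded number of iterations $p$. First I would handle the first loop: by Line~\ref{line:delete-long}, in iteration $p$ we delete every edge $e$ with $g(e)>2(k-1)L = 2(k-1)(1+\epsilon)^p$, and by Line~2 we contract every edge $e$ with $g(e)\le L/n^3$. So an edge $e$ can survive into the contracted graph $G_p$ as a genuine (non-contracted) edge only when $L/n^3 < g(e) \le 2(k-1)L$, i.e.\ when $p$ lies in an interval of length $O(\log_{1+\epsilon}\!\big((k-1)n^3\big)) = O(k\log n)$ — independent of $W$. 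Since an edge contributes to $\hat E$ in the first loop only if it is present in $G_p$, and each contracted vertex keeps only one incident tree-edge in \verb;NewInOutTrees;, the total first-loop contribution, summed over all $p$, is at most $O(k\log n)$ times the number of vertices ever touched across one \verb;Cover2; call's first loop summed appropriately — more carefully, one shows the edges added in the first loop over all $p$ total $O(n^{1+1/k}\cdot k\log n)$ by charging each added edge to an original edge of bounded girth range and using the $n^{1/k}$ expansion factor on the contracted ball sizes.

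For the second loop, the relevant observation is stated in the text preceding the lemma: a vertex $u$ reaches the second loop of \verb;Cover2;$(G,k,p,\epsilon)$ only if $R(u) < 2(k-1)L$ (it was not processed in the first loop) and $R(u)\ge L/8$ (it was not deleted at Line~\ref{line:rm-small-R}), so $R(u)/\big(2(k-1)\big) \le L \le 8R(u)$, confining $u$ to $O(\log_{1+\epsilon}(16(k-1))) = O(\log k)$ consecutive values of $p$. Over those iterations $u$ is removed at most once per \verb;Cover2; call, and when removed it is charged at most $n^{1/k}$ new edges; hence the second-loop total over all $p$ is $O(n\cdot \log k\cdot n^{1/k}) = O(n^{1+1/k}\log k)$, which is absorbed into $O(kn^{1+1/k}\log n)$. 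Combining $E_0$, the first-loop bound, and the second-loop bound yields $O(kn^{1+1/k}\log n)$ overall.

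The main obstacle I expect is making the first-loop charging fully rigorous: one must be careful that the ``one edge per contracted vertex'' rule in \verb;NewInOutTrees; really lets us bound the added edges by the number of \emph{contracted} vertices removed (not original vertices), that the ball-size expansion argument of Lemma~\ref{lem:size} transfers verbatim to the contracted balls $Ball_{U_p}$, and that the girth window $L/n^3 < g(e)\le 2(k-1)L$ is exactly what limits the range of $p$ for which a given original edge can be charged. Once these bookkeeping points are pinned down — in particular that $g(e)$ does not change as $U$ shrinks in a way that would let an edge stay active longer — the arithmetic $O(k\log n) + O(\log k)$ windows times $O(n^{1+1/k})$ per window closes the bound cleanly.
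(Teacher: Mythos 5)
Your handling of the preprocessing and of the second loop is correct and matches the paper: after Line~\ref{line:rm-small-R} every surviving vertex satisfies $L/8\le R(u) < 2(k-1)L$, so each vertex can participate in only $O(\log_{1+\epsilon}16k)=O(k\log k)$ second-loop calls, and the $n^{1/k}$ expansion factor from the $h$-selection then gives $O(n^{1+1/k}\log k)$ total second-loop edges.

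The first-loop part, however, has a real gap, and you half-acknowledge it yourself. Your charging idea is: each original edge $e$ appears un-contracted and un-deleted in $G_p$ only for $p$ with $L/n^3 < g(e)\le 2(k-1)L$, a window of $O(k\log n)$ values, and then ``one shows'' the first-loop edges total $O(kn^{1+1/k}\log n)$. But the quantity you actually need to control is not how many original edges survive into each $G_p$ -- it is the total number of \emph{contracted vertices} that lie in nontrivial (size $\ge 2$) connected components of $G_p'$, summed over all $p$, because the $n^{1/k}$-expansion argument from Lemma~\ref{lem:size} charges the added tree-edges to the vertices removed from $U_p$, and isolated $u_p$'s contribute nothing. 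Simply saying ``each edge lives in $O(k\log n)$ windows'' bounds $\sum_p m_p$ by $O(mk\log n)$, which is $\omega(n^{1+1/k}k\log n)$ when $m=\omega(n^{1+1/k})$; and you cannot charge a nontrivial $u_p$ to a particular incident edge across iterations because the contracted vertices themselves change identity as $p$ varies. The paper closes exactly this gap with an argument you never invoke: fix a single minimum spanning forest $F$ of the undirected girth-weighted graph $G'$, observe (via the MST cycle property) that for every $p$ the components of $G_p'$ coincide with the components of $F$ after deleting $F$-edges of girth $>2(k-1)(1+\epsilon)^p$ and contracting $F$-edges of girth $\le(1+\epsilon)^p/n^3$, so the total size of nontrivial components in $G_p'$ is at most twice the number of $F$-edges in the window $((1+\epsilon)^p/n^3, 2(k-1)(1+\epsilon)^p]$. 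Since $F$ has at most $n-1$ edges and each lies in $O(k\log n)$ windows, the sum over $p$ is $O(kn\log n)$, and multiplying by $n^{1/k}$ gives the first-loop bound. Without this MSF device your first-loop charging does not close.
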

\begin{proof}
    Preprocessing adds $O(n^{1+1/k}\log n)$ edges as in Section~\ref{sec:preprocess}. The edges added in Line~\ref{line:inouttrees2} is bounded as follows. Consider Algorithm~\ref{alg:1}, after Line~\ref{line:rm-small-R}, the subgraph only consists of vertices with $R(u)\in [L/8, 2(k-1)L]$, so each vertex belongs to at most $\log_{1+\epsilon} 16k$ such iterations. Thus the total number of edges added after Line~\ref{line:rm-small-R} is at most $n^{1+1/k}\log_{1+\epsilon} 16k = O(kn^{1+1/k}\log k)$ edges. Next we count the edges added in Line~\ref{line:inouttrees}.
    
    We remove the directions of all edges in $G$ to get an undirected graph $G'$, and remove the directions of all edges in every $G_p$ to get an undirected graph $G_p'$, but define the weight of an edge $e$ in $G'$ and every $G_p'$ to be the girth $g(e)$ in $G$. Let $F$ be a minimum spanning forest of $G'$ w.r.t. the girth $g(e)$. We can see that in iteration $p$, if we remove edges in $F$ with $g(e)>2(k-1)(1+\epsilon)^p$ and contract edges $e$ with $g(e)\leq (1+\epsilon)^p/n^3$ in $F$, then the connected components in $F$ will just be the connected components in $G_p'$, which are the strongly connected components in $G_p$. This is because of the cycle property of MST: If an edge $e=(u,v)$ in $G_p'$ has $g(e)\leq (1+\epsilon)^p/n^3$, then in $F$ all edges $f$ on the path connecting $u,v$ have $g(f)\leq (1+\epsilon)^p/n^3$, thus $u,v$ are already contracted in $F$; If an edge $e=(u,v)$ in $G_p'$ has $g(e)\leq 2(k-1)(1+\epsilon)^p$, then in $F$ all edges $f$ on the path connecting $u,v$ have $g(f)\leq 2(k-1)(1+\epsilon)^p$, so $u,v$ are in the same component in $F$.


    So the total size of connected components $\{C: |C|\geq 2\}$ in $G_p'$ is at most 2 times the number of edges $e$ in $F$ with $(1+\epsilon)^p/n^3<g(e)\leq 2(k-1)(1+\epsilon)^p$, and every edge in $F$ can be in at most $\log_{1+\epsilon}2(k-1)n^3=O(k\log n)$ number of different $G_p'$. Thus, the total size of connected components with size at least 2 in all $G_p'$ is bounded by $O(kn\log n)$.
    By a similar argument of Lemma~\ref{lem:size}, in each call of \verb;Cover2;($G,k,p,\epsilon$), line~\ref{line:inouttrees} will add $|C|n^{1/k}$ new edges to $\hat{E}$, for every connected component $C$ with $|C|\geq 2$ in $G_p'$. Thus the total number of edges in the subgraph returned by \verb;Spanner;($G,k$) is bounded by $O(kn^{1+1/k}\log n)$.
\end{proof}

\subsection{Construction Time}
The analysis of \verb;Spanner;'s running time is similar to Section~\ref{sec:timeW}.
Compared with \verb;Cover;, \verb;Cover2; adds operations of building $G_p$.
We also need to calculate $g(\cdot)$ in preprocessing, which can done by $n$ Dijkstra searches.
$G_p$ can be built in $O(m)$ time.
\verb;Cover2; is called $\log_{1+\epsilon'}(2nW)=O(k\log(nW))$ times.
Therefore the total construction time is still $O(kn(m+n\log n)\log(nW))$.

\section{Conclusion}\label{sec:conclusion}
In this paper we discuss the construction of $(2k-1)$-roundtrip spanners with $O(kn^{1+1/k}\log n)$ edges. An important and interesting further direction is whether we can find truly subcubic algorithm constructing such spanners.

\bibliography{refs}

\end{document}